\pdfoutput=1
\documentclass[acmsmall,screen, natbib=false]{acmart}
\setcopyright{none}
\copyrightyear{2024}
\acmYear{2024}
\acmDOI{}

\usepackage{fancyhdr}

\PassOptionsToPackage{strict}{csquotes} %
\usepackage{mathtools} %
\usepackage{booktabs}
\usepackage{hyperref}
\usepackage{tikz}
\usepackage{pgfplots}
\usepackage[capitalize,nameinlink]{cleveref} %
\RequirePackage[backend=biber,%
            sortcites=true,%
            maxcitenames=2,%
            minbibnames=3,
            maxbibnames=5, %
            natbib=true,%
                style=ACM-Reference-Format,
            language=american]{biblatex}
\usepackage{nag} %
\usepackage{fvextra}
\usepackage[htt]{hyphenat}
\usepackage{csquotes} %
\addbibresource{paper.bib}

\DeclareNameFormat{labelname:poss}{%
  \nameparts{#1}%
  \ifcase\value{uniquename}%
    \usebibmacro{name:family}{\namepartfamily}{\namepartgiven}{\namepartprefix}{\namepartsuffix}%
  \or
    \ifuseprefix{\usebibmacro{name:first-last}{\namepartfamily}{\namepartgiveni}{\namepartprefix}{\namepartsuffixi}}
      {\usebibmacro{name:first-last}{\namepartfamily}{\namepartgiveni}{\namepartprefixi}{\namepartsuffixi}}%
  \or
    \usebibmacro{name:first-last}{\namepartfamily}{\namepartgiven}{\namepartprefix}{\namepartsuffix}%
  \fi
  \usebibmacro{name:andothers}%
  \ifnumequal{\value{listcount}}{\value{liststop}}{'s}{}}
\DeclareFieldFormat{shorthand:poss}{%
  \ifnameundef{labelname}{#1's}{#1}}
\DeclareFieldFormat{citetitle:poss}{\mkbibemph{#1}'s}
\DeclareFieldFormat{label:poss}{#1's}
\newrobustcmd*{\posscitealias}{%
  \AtNextCite{%
    \DeclareNameAlias{labelname}{labelname:poss}%
    \DeclareFieldAlias{shorthand}{shorthand:poss}%
    \DeclareFieldAlias{citetitle}{citetitle:poss}%
    \DeclareFieldAlias{label}{label:poss}}}
\newrobustcmd*{\citeposs}{%
  \posscitealias%
  \textcite}
\newrobustcmd*{\Citeposs}{\bibsentence\citeposs}
\newrobustcmd*{\Citeposses}{%
  \posscitealias%
  \textcites}

\newcommand{\floor}[1]{\lfloor #1 \rfloor}

\newcommand{\figsize}{\small}

\def\truesymbol{{\tt \#}{\sf t}}
\def\falsesymbol{{\tt \#}{\sf f}}

\def\truth#1{\def\foo{#1}\def\goo{t}%
               \ifx\foo\goo{\truesymbol}\else{\falsesymbol}\fi}

\pgfplotsset{compat=1.18}

\begin{document}

\title{Six Ways to Implement Divisibility by Three in miniKanren}

\author{Brett Schreiber} \email{schreiberbrett@gmail.com} \affiliation{%
  \institution{Independent Researcher}
  \city{Philadelphia}
  \state{PA}
  \country{USA} }

\author{Brysen Pfingsten} \email{pfingsbr@shu.edu} \affiliation{%
  \institution{Seton Hall University}
  \city{South Orange}
  \state{NJ}
  \country{USA} }

\author{Jason Hemann} \email{hemannja@shu.edu} \affiliation{%
  \orcid{0000-0002-5405-2936}
  \institution{Seton Hall University}
  \city{South Orange}
  \state{NJ}
  \country{USA} }

\settopmatter{printacmref=false}
\settopmatter{printfolios=true}
\renewcommand\footnotetextcopyrightpermission[1]{}
\pagestyle{fancy}
\fancyfoot{}
\fancyfoot[R]{miniKanren'24}
\fancypagestyle{firstfancy}{
  \fancyhead{}
  \fancyhead[R]{miniKanren'24}
  \fancyfoot{}
}
\makeatletter
\let\@authorsaddresses\@empty
\makeatother

\begin{abstract}
  This paper explores options for implementing the relation $n \equiv 0 \ (\text{mod} \ 3)$ within miniKanren using miniKanren numbers and its arithmetic suite.
  We examine different approaches starting from straightforward implementations to more optimized versions.
  The implementations discussed include brute-force arithmetic methods, divisibility tricks, and derivation from a finite automaton.
  Our contributions include an in-depth look at the process of implementing a miniKanren relation and observations on benchmarking \texttt{defrel}s.
  This study aims to provide practical insights for miniKanren programmers on both performance and implementation techniques.
\end{abstract}

\keywords{miniKanren, logic programming, automata}

\begin{CCSXML}
<ccs2012>
   <concept>
       <concept_id>10011007.10011006.10011050.10011017</concept_id>
       <concept_desc>Software and its engineering~Domain specific languages</concept_desc>
       <concept_significance>500</concept_significance>
       </concept>
   <concept>
       <concept_id>10011007.10011006.10011008.10011009.10011012</concept_id>
       <concept_desc>Software and its engineering~Functional languages</concept_desc>
       <concept_significance>500</concept_significance>
       </concept>
   <concept>
       <concept_id>10011007.10011006.10011008.10011009.10011015</concept_id>
       <concept_desc>Software and its engineering~Constraint and logic languages</concept_desc>
       <concept_significance>500</concept_significance>
       </concept>
   <concept>
       <concept_id>10003752.10003766</concept_id>
       <concept_desc>Theory of computation~Formal languages and automata theory</concept_desc>
       <concept_significance>300</concept_significance>
       </concept>
   <concept>
       <concept_id>10011007.10011006.10011008.10011009.10011015</concept_id>
       <concept_desc>Software and its engineering~Constraint and logic languages</concept_desc>
       <concept_significance>500</concept_significance>
       </concept>
 </ccs2012>
\end{CCSXML}

\ccsdesc[500]{Software and its engineering~Domain specific languages}
\ccsdesc[500]{Software and its engineering~Functional languages}
\ccsdesc[500]{Software and its engineering~Constraint and logic languages}
\ccsdesc[300]{Theory of computation~Formal languages and automata theory}
\ccsdesc[500]{Software and its engineering~Constraint and logic languages}

\maketitle{}
\thispagestyle{firstfancy}

\section{Introduction}\label{sec:introduction}

When a miniKanren programmer figures out which relation she wants to implement, that is just the beginning of the relational programming exercise.
Programmers in traditional languages know that many different procedures can implement the same mathematical function.
Likewise, the relational programmer is intimately familiar with the distinction between a mathematical relation and a \texttt{defrel} that implements it.
Constraint logic programming (CLP) offers the promise of clear, readable, unambiguous solutions that can be efficiently executed.
Every miniKanren programmer knows, however, that the \texttt{defrel}'s in the details.
Two \texttt{defrel}s implementing the same relation can differ asymptotically in their performance or in the way that they characterize the answer set.
This paper presents a case study illustrating where and how these issues can arise in practice.
Our working example is the unary relation describing the numbers evenly divisible by three, expressed in the usual list-based little-endian binary miniKanren numeral system (i.e.\ \enquote{Oleg numbers}).

We present six different one-argument predicates that implement this relation.
We start with succinct implementations that perform sub-optimally and we gently move towards more efficient implementations that unfold naturally from a corresponding function.
In the course of this exploration we shed light on the process of implementing a miniKanren relation and articulate heretofore unpublished folklore concerning difficulties benchmarking \texttt{defrel}s against each other.

\Cref{sec:minikanren-arithmetic} briefly reprises some important features of arithmetic in miniKanren.
In \Cref{sec:brute-force}, we present a selection of brute-force approaches to the problem.
In \Cref{sec:exploiting-the-structure-of-oleg-numbers} we exploit a divisibility trick for binary natural numbers and discover an improved implementation using partially instantiated miniKanren numbers to produce more general solutions.
We discuss the performance of our different implementations, and implications for benchmarking, in \Cref{sec:performance}.
Finally, we conclude by showing further opportunities for miniKanren to learn from the bit-tricks of hardware design, and some discussion of the broader implications of this exercise for miniKanren programmers generally.
\paragraph{Note to the reader} In the following \lcnamecrefs{sec:brute-force} we are careful to distinguish between the mathematical \textit{relation} and the \texttt{defrel} that (perhaps imperfectly) implements it.
We will, likewise, carefully distinguish between the number $n$ and its binary numeral expression \verb|n|.
Each relation implementation \texttt{impl} of \texttt{div3o} follows the naming convention \texttt{div3o/\textit{\{}impl\textit{\}}}.
Those implementations are:
\begin{itemize}
\item[-] \texttt{div3o/3*x}
\item[-] \texttt{div3o/x*3}
\item[-] \texttt{div3o/3+x}
\item[-] \texttt{div3o/x+3}
\item[-] \texttt{div3o/dfa}
\item[-] \texttt{div3o/even-odd}
\end{itemize}.
The code in this document uses the \texttt{faster-miniKanren}\footnote{\url{https://github.com/michaelballantyne/faster-minikanren}} implementation.

\section{Formalizing relational divisibility}\label{sec:minikanren-arithmetic}

We adopt the usual miniKanren pure relational arithmetic suite of \citet{friedman2018reasoned}.
Numbers are represented as little-endian binary lists, hereafter called \enquote{Oleg numerals} (or Oleg numbers, in a slight abuse of terminology).
\Citet{kiselyov2008pure} establish the theory of pure relational arithmetic and the key results.
We briefly reprise here some of their work and recall the necessary logic programming background (from, e.g.,\ \citet{lloyd1987foundations}, \citet{baader2001unification}).
In so doing we will introduce our problem of divisibility by three by way of a more familiar example: divisibility by \emph{two}.
Specific miniKanren examples in this \lcnamecref{sec:minikanren-arithmetic} are drawn from \citeauthor{friedman2018reasoned} or \citeauthor{kiselyov2008pure}.

Let terms $t_{1}$, $t_{2}$ be such that $t_{1}\theta = t_{2}$ for some substitution $\theta$.
In this case we say that $t_{1}$ is \emph{more specific than} $t_{2}$, and $t_{2}$ is \emph{more general than} $t_{1}$\footnote{ISO Prolog~\cite{ISO13211} provides the predicate \texttt{subsumes\_term/2} that directly implements these relationships.}.
The \emph{most specific generalization}~\cite{Flener1995} of two terms $t_{1}$ and $t_{2}$ is a term $t$ such that $t_{1} = t\theta_{1}$ and $t_{2} = t\theta_{2}$ for some substitutions $\theta_{1}$, $\theta_{2}$ and such that for any other term $s$ that generalizes $t_{1}$ and $t_{2}$, $s$ also generalizes $t$.

A \emph{goal} $g$ is understood as a function from a substitution to a finite or infinite stream of substitutions.
Goals can succeed or fail, and a goal can succeed multiple times.
For a miniKanren goal $g[x_{1},\ldots{},x_{n}]$, a \emph{solution} is an $n$-tuple of terms $(t_{1},\ldots{},t_{n})$ where each $t_{i}$ instantiates $x_{i}$ after the evaluation of $g[x_{1},\ldots{},x_{n}]$ at some point in the search.
We can equivalently represent a solution as $(x_{1},\ldots{},x_{n})\theta$, for some substitution $\theta$.

The miniKanren language presents solutions to the query goal with respect to the set of \texttt{run} variables provided with the query.
In miniKanren solutions are \emph{reified}, meaning a solution's uninstantiated variables are consistently replaced by canonical constants (written \enquote{\texttt{\_.}\textit{n}}), numbered by order of their first occurrence left to right in the solution.
Each logic variable is replaced by the same identifier across all its occurrences.
For instance, the query to \texttt{poso} in \cref{fig:poso} produces one solution, \verb|(_0 . _.1)|, because the query ensures that variable \texttt{q} has the structure of a pair.

\begin{figure}[ht]
\figsize{}
\begin{Verbatim}
(defrel (poso n)
  (fresh (a d)
    (== n `(,a . ,d))))

> (run 1 (q) (poso q))
'((_.0 . _.1))
\end{Verbatim}
\Description[]{}
\caption{The \texttt{poso} relation.}\label{fig:poso}
\end{figure}

A \emph{solution sequence} is a sequence of solutions obtained by repeatedly evaluating a goal in the execution of the language's backtracking search for solutions.
The precise search strategy matters; we of course use miniKanren's complete interleaving depth-first search~\cite{kiselyov2005backtracking}.
Some queries will have only finitely many solutions, other have infinitely many.
In contrast with the \texttt{poso} query, a \texttt{run} query for \texttt{(olego q)} using the \texttt{defrel} in \cref{fig:olego} can produce arbitrarily many solutions.

\begin{figure}[ht]
\figsize{}
\begin{Verbatim}
(defrel (olego n)
  (conde
    [(== n '())]
    [(fresh (a d)
       (== n `(,a . ,d))
       (conde
         [(== a 0) (poso d)]
         [(== a 1)])
       (olego d))]))
\end{Verbatim}
\Description[]{}
\caption{The definitions of \texttt{olego}, which unifies its argument to a ground Oleg number.}\label{fig:olego}
\end{figure}

The \texttt{olego} \texttt{defrel} is a generator for Oleg numerals.
Zero is represented as the empty list \verb|'()|.
All positive numbers are lists ending in \verb|1|.
Note the necessary \texttt{poso} required when the least significant bit of the input, \texttt{a}, is zero.
This prevents the input list, \texttt{n}, from ever ending with \texttt{0}.

Like in other logic languages, typically the meaning of a (partially instantiated) miniKanren term $p$ is taken to be the set of all its ground instances---which is to say the subset of the Herbrand Universe $\{h_{i}~\vert~h_{i} = p\theta_{i}\}$.
Here and in the following, we will for convenience represent partially instantiated terms in their reified form.
In certain cases like declarative arithmetic it makes sense to consider a partially instantiated term $p$ as describing only a subset of the ground miniKanren terms that instantiate it.
We act as though the term has some implicit constraints on it, namely, that it is only instantiated into Oleg numerals.
We say that a partially-instantiated miniKanren term $p$, is \emph{implicitly typed} as an Oleg number if the predicate \texttt{olego} succeeds on $p$.
With the symbolic representation of numbers, this means one partially instantiated term can stand for an infinite set.
For example, the term \verb|(_.0 . _.1)| represents the set of positive numbers, because it is the most specific generalization of all nonzero Oleg numbers.

Partially instantiated terms and the miniKanren arithmetic suite's symbolic representation of numbers combine to produce concise descriptions of sets of numbers.
In this context, our task of deciding if a $n$ is a multiple of three might seem straightforward---after all divisibility-by-two has a well-known canonical solution long used in hardware design, and more recently in relational programming.
A multiple of two is either zero, or a non-zero number bit-shifted to the right.
In \cref{fig:div2o}, we implement the predicate in the obvious way.
\begin{figure}[H]
\figsize{}
\begin{Verbatim}
(defrel (div2o n)
  (conde
    [(== n '())]
    [(fresh (n/2)
       (== n `(0 . ,n/2))
       (poso n/2))]))

> (run* (q) (div2o q))
'(() (0 _.0 . _.1))
\end{Verbatim}
\Description[]{}
\caption{The \texttt{div2o} relation and a query for all solutions.}\label{fig:div2o}
\end{figure}

Like in other logic languages, a partially instantiated miniKanren term $p$ is always the most general unifier for the infinite set of terms to which $p$ could be concretely instantiated.
Once again the use of \texttt{poso} enforces the implicit type constraint in the second result in the query.
The above query for all solutions to \texttt{div2o} on a fresh variable demonstrates that two solutions suffice to  describe the infinite set of divisible-by-two Oleg numbers.
With our implicit constraints, when $p =$ \verb|(0 _.0 . _.1)| it represents the set of positive even numbers, \(\{ 2n \mid n \in \mathbb{N}^+ \}\).

What makes our task more interesting is that even with these affordances, the set of Oleg numbers evenly divisible by \emph{three} cannot be precisely captured by a finite number of solutions.
\begin{lemma}\label{lemma:no-finite}
No finite set of solutions characterizes exactly the Oleg numbers divisible by three.
\end{lemma}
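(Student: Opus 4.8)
The plan is to prove the lemma by a counting argument based on residues modulo powers of two. The crux is that the set of Oleg instances of any \emph{single} partially instantiated term looks, once we restrict attention to numbers with sufficiently many bits, like a finite union of residue classes modulo some $2^{m}$; but ``divisible by three'' never has that form, because $3$ is coprime to every power of two.

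First I would nail down the shape of a solution. A reified solution for a one-place query is a (possibly improper) list whose entries are each $0$, $1$, or a reified variable, and whose tail is either $\texttt{'()}$ or a single fresh variable. For such a term to be implicitly typed as an Oleg number --- i.e.\ for \texttt{olego} to succeed on it --- a trailing variable must be fresh and must not also occur among the list entries, since otherwise it would have to be simultaneously a bit and a list, the term would have \emph{no} Oleg instances, and it could be discarded outright. So each retained solution $p$ consists of a header $(b_{0},\dots,b_{m-1})$ of bits-or-variables together with either a $\texttt{'()}$ tail (in which case $p$ has only finitely many Oleg instances) or a free variable tail. In the latter case I would show that for every $N > m$ the set of $N$-bit Oleg instances of $p$ is exactly the set of $N$-bit numbers whose low $m$ bits match the header pattern: the equality constraints induced by repeated header variables are conditions on those low $m$ bits alone, hence on the value modulo $2^{m}$, while the bits above position $m-1$ are entirely unconstrained because the tail may be instantiated to any nonempty Oleg numeral. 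Hence, restricted to the $N$-bit window, the instance set of $p$ is a union of residue classes modulo $2^{m}$.

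Next, suppose toward a contradiction that a finite family $p_{1},\dots,p_{k}$ characterizes exactly the Oleg multiples of three, i.e.\ the union of their Oleg-instance sets is precisely that set. Let $M$ be the largest header length among those $p_{i}$ with a variable tail (and $M = 0$ if there are none), let $F$ be the finite set of all Oleg instances of the $\texttt{'()}$-tailed $p_{i}$, and pick $N$ with $N > M + 2$ and $2^{N-1} > \max F$. Intersecting the hypothesized identity with the set of $N$-bit numbers removes every $\texttt{'()}$-tailed contribution and yields that the $N$-bit multiples of three form a union $U$ of residue classes modulo $2^{M}$ (intersected with the $N$-bit window). But $U$ contains the entire class modulo $2^{M}$ of any $N$-bit multiple of three, and such a number exists; within the $N$-bit window that class is a block of $2^{N-1-M} \geq 4$ consecutive terms of an arithmetic progression of common difference $2^{M}$. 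Since $\gcd(3, 2^{M}) = 1$, any three consecutive such terms realize all three residues modulo $3$, so the block contains $N$-bit numbers that are \emph{not} multiples of three --- contradicting that $U$ restricted to $N$-bit numbers is exactly the $N$-bit multiples of three.

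I expect the main obstacle to be the first step: carefully justifying that, in a high enough window of bit-lengths, a variable-tailed solution contributes exactly a union of residue classes modulo a power of two. This needs a precise normal form for reified partially instantiated Oleg numerals (entries are bits or variables; only the tail can be a variable; that tail is not aliased with any header variable) and a little care with boundary cases near $0$ --- patterns that pin the top header bit to $0$, or the empty-tail instance --- which is exactly why the argument is phrased over a sufficiently large window rather than over all Oleg numbers at once. Once the normal form is in place, coprimality of $3$ with powers of two finishes the job, and the same argument in fact establishes the analogous statement for divisibility by any $d$ that is not a power of two.
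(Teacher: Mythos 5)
Your proof is correct and follows essentially the same route as the paper's: both argue that a finite characterizing set must contain a term with a fresh tail variable (since length-instantiated solutions have bounded length while multiples of three do not), and both then exploit the fact that powers of two are units modulo $3$ to exhibit a non-multiple of three among that term's instances. Your residue-class-modulo-$2^{M}$ packaging and explicit normal form for reified terms make the argument somewhat more careful than the paper's---which simply varies the tail value $n$ in $n \cdot 2^{k} + \mathit{spfx}_{i}$ and picks $n$ with $n \not\equiv 0 \pmod{3}$---but the underlying idea is identical.
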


\begin{proof}

There are infinitely many numbers divisible by three.
The least significant bits of the sequence of numerals starting at 4 cycle as follows: $($\texttt{'(0 0)}$,$  \texttt{'(1 0)}$,$  \texttt{'(0 1)}$,$ \texttt{'(1 1)}$)$.
Divisibility by three is, definitionally, a congruence modulo 3, meaning at least one, if not two, of every four successive numbers must be divisible by three.
So for any $n > 2$ at least one multiple of three requires $n$ bits in its binary representation.
The binary representation of multiples of three grows without bound as the numbers themselves grow.

Now, suppose, toward contradiction, that $s$ is such a desired set of solutions.
For any finite set of length-instantiated partial terms, there must be a length $l$ that is the maximum length of any $p$ in the set.
Such a set necessarily excludes a multiple of three of length $l + 1$.
So for $s$ to cover all multiples of three, it must therefore contain at least one non-length-instantiated list prefix $p_i = bt_1,\ldots{},bt_k . x$, with $x$ a fresh variable in its final \texttt{cdr}.
A terminal fresh variable can compactly represent lists (numerals) of arbitrary length.
Then, for $\mathit{spfx}$ the set of numbers represented by the prefix $(\mathtt{bt_1},\ldots,\mathtt{bt_k})$, the term $p_i$ represents the set of numbers \(\{n\times2^{\floor{\log_2 \mathit{spfx}_i}} + \mathit{spfx}_i \mid \mathit{spfx}_i \in \mathit{spfx}, n \in \mathbb{N} \}\).
The prefix $(\mathtt{bt_1},\ldots,\mathtt{bt_k})$ must represent a set consisting only of multiples of three, since $x$ could represent 0 and length-instantiate the prefix.
The powers of two cycle $2^0 \equiv 1 \pmod{3}, 2^1 \equiv 2 \pmod{3}, 2^2 \equiv 1 \pmod{3} , 2^3 \equiv 2 \pmod{3}$.
This means that $2^k \pmod{3}$ will be either 1 or 2, depending whether the length of $(\mathtt{bt_1},\ldots,\mathtt{bt_k})$ is even or odd.
Given that $\mathit{spfx}_{i} \equiv 0 \pmod{3}$, $n\times2^{\floor{\log_2 \mathit{spfx}_i}} + \mathit{spfx}_i$ is equal either to $n \pmod{3}$ or $2n \pmod{3}$.
No matter which, we can find $n \in \mathbb{N}$ such that the required value is not congruent to 3, thus contradicting our initial assumption.

\end{proof}

Comparing the performance of implementations of this relation, therefore, requires considering both the time it takes to generate some prefix of the solution sequence as well as the quantity of answers covered by those solutions.

\section{Relational arithmetic solutions}\label{sec:brute-force}

In one sense \citeauthor{kiselyov2008pure} already solve our problem in the general case, $n \equiv 0 \pmod{k}$.
The ternary relation \texttt{*o} holds for numerals \texttt{a}, \texttt{b}, and \texttt{c} when $a \times b = c$.
The numeral \texttt{c} therefore represents a multiple of three when one of either the multiplier or multiplicand are \texttt{'(1 1)}.
\begin{figure}[ht]
\figsize{}
\begin{Verbatim}
(defrel (div3o/3*x n)
  (fresh (x)
    (*o '(1 1) x n)))

(defrel (div3o/x*3 n)
  (fresh (x)
    (*o x '(1 1) n)))
\end{Verbatim}
\Description[]{}
\caption{Two \texttt{div3o} implementations that use relational multiplication.}\label{fig:div3o-mults}
\end{figure}
Using their multiplication, however, still leaves a choice about whether the multiplier or multiplicand is 3.
In \cref{fig:div3o-mults}, we try it both ways: \texttt{div3o/3*x} grounds its first argument to \texttt{'(1 1)}, whereas \texttt{div3o/x*3} grounds its second.

One could perform a careful analysis of the two implementations in order to argue which will be more efficient.
We instead compare \texttt{div3o/3*x} and \texttt{div3o/x*3} with wall-clock timing in \cref{sec:performance}.
These measurements show that \texttt{div3o/3*x} and \texttt{div3o/x*3} do indeed have different performance characteristics.
Next, we introduce more \texttt{div3o} implementations that run faster than the \texttt{defrel}s based on \texttt{*o}.
\subsection{Adding still more options}\label{sec:brute-force-addition}

The \texttt{pluso} operator offers yet more possible implementations.
Although the \texttt{*o} \texttt{defrel} performs some extra work to ensure correct termination behavior in all modes, the implementations of \cref{fig:div3o-mults} use \texttt{*o} in a limited manner.
Those \texttt{*o}-derived variants may be more general than we need and less efficient than we want.
We can instead use \texttt{pluso} and express divisibility by three through repeated addition.
A number is divisible by three either when it is zero, or if it is 3 more than another number divisible by 3.
\begin{figure}[ht]
\figsize{}
\begin{Verbatim}
(defrel (div3o/x+3 n)
  (conde
    [(== n '())]
    [(fresh (x)
       (poso n)
       (pluso x '(1 1) n)
       (div3o/x+3 x))]))

(defrel (div3o/3+x n)
  (conde
    [(== n '())]
    [(fresh (x)
       (poso n)
       (pluso '(1 1) x n)
       (div3o/3+x x))]))
\end{Verbatim}
\Description[]{}
\caption{Two more \texttt{div3o} implementations that use relational addition.}
\end{figure}
We use the \texttt{pluso} relation to express the second case.
As with the brute-force multiplication \texttt{defrel}s, there are two approaches depending whether we instantiate the augend or the addend with the ground constant \texttt{'(1 1)}.

Unlike the \texttt{div3o} implementations based on \texttt{*o}, these \texttt{pluso}-based implementations do not explicitly find the factor being multiplied with 3.
However, the other factor appears implicitly as the number of repeated additions performed;
In \cref{sec:exploiting-the-structure-of-oleg-numbers}, we introduce two more \texttt{div3o} implementations that do not, directly or indirectly, depend on the other factor.

\section{Exploiting the structure of Oleg numbers}\label{sec:exploiting-the-structure-of-oleg-numbers}
A base-ten number is divisible by three if and only if the sum of its digits is divisible by three.
There is a similarly well-known technique for testing the divisibility by three of a binary number: a binary number is divisible by three if and only if the sum of its bits at even indices minus the sum of its bits at odd indices is divisible by three.
Mathematically, this is expressible as:
\[ \sum_{\text{odd } i} x_i - \sum_{\text{even } i} x_i \equiv 0 \bmod 3 \]

\noindent where $x_{i}$ indicates the element at the $i$th position of the numeral \verb|x|.
This congruence mod 3 implies we need consider only three states while summing the bits at even and odd indices, which in turn implies we can decide the problem via a simple DFA\footnote{We omit the derivation of this DFA; an explanation is available at \url{https://cs.stackexchange.com/a/7889}.} like that of \cref{fig:minimized-dfa}.
At any point in the DFA, the Oleg numeral is 0, 1, or 2 state transitions from the accepting state, represented by \texttt{'q1}.
While in \texttt{'q1} one \texttt{conde} branch will unify the \texttt{cdr} with the empty list, terminating the recursion and generating a solution.
The other branch continues recursion by grounding the \texttt{cadr} to 0 or 1. 
If the \texttt{cadr} is 0, the congruence remains unaffected. 
If the \texttt{cadr} is 1, the congruence is disrupted, causing the DFA to transition out of the accepting state. 
It will return to the accepting state once the congruence mod 3 is restored to 0 by encountering another 1 as represented by the transition from \texttt{'q2} to \texttt{'q1}.
\texttt{'q2} and \texttt{'q3} represent states where the sum of bits mod 3 is not 0.
The DFA in these states is one transition away from the accepting state, needing either one evenly or oddly indexed bit to restore the congruence mod 3 to 0.
\texttt{'q2} transitions to the accepting state upon receiving a 1, while \texttt{'q3} transitions to \texttt{'q2} upon receiving a 0. 
If \texttt{'q2} or \texttt{'q3} receives a 0, they oscillate until \texttt{'q2} receives a 1 and transitions to the accepting state.
Finally, when \texttt{'q3} generates a 1, the DFA will remain at \texttt{'q3}. 

\begin{figure}[ht]
\figsize{}
\centering
\begin{tikzpicture}[scale=0.2]
\tikzstyle{every node}+=[inner sep=0pt]
\draw [black] (19.6,-31.2) circle (3);
\draw (19.6,-31.2) node {$q_1$};
\draw [black] (19.6,-31.2) circle (2.4);
\draw [black] (34.7,-31.2) circle (3);
\draw (34.7,-31.2) node {$q_2$};
\draw [black] (49.2,-31.2) circle (3);
\draw (49.2,-31.2) node {$q_3$};
\draw [black] (13.2,-31.2) -- (16.6,-31.2);
\fill [black] (16.6,-31.2) -- (15.8,-30.7) -- (15.8,-31.7);
\draw [black] (22.169,-29.664) arc (113.88717:66.11283:12.302);
\fill [black] (32.13,-29.66) -- (31.6,-28.88) -- (31.2,-29.8);
\draw (27.15,-28.11) node [above] {$1$};
\draw [black] (32.205,-32.85) arc (-63.97796:-116.02204:11.522);
\fill [black] (22.1,-32.85) -- (22.59,-33.65) -- (23.03,-32.75);
\draw (27.15,-34.52) node [below] {$1$};
\draw [black] (18.277,-28.52) arc (234:-54:2.25);
\draw (19.6,-23.95) node [above] {$0$};
\fill [black] (20.92,-28.52) -- (21.8,-28.17) -- (20.99,-27.58);
\draw [black] (47.877,-28.52) arc (234:-54:2.25);
\draw (49.2,-23.95) node [above] {$1$};
\fill [black] (50.52,-28.52) -- (51.4,-28.17) -- (50.59,-27.58);
\draw [black] (37.412,-29.931) arc (108.93363:71.06637:13.986);
\fill [black] (46.49,-29.93) -- (45.89,-29.2) -- (45.57,-30.14);
\draw (41.95,-28.67) node [above] {$0$};
\draw [black] (46.755,-32.92) arc (-62.99516:-117.00484:10.581);
\fill [black] (37.15,-32.92) -- (37.63,-33.73) -- (38.09,-32.84);
\draw (41.95,-34.57) node [below] {$0$};
\end{tikzpicture}
\Description[]{A minimized DFA for 3-divisible Oleg number.}
\caption{A minimized DFA that accepts bitstrings divisible by 3}\label{fig:minimized-dfa}
\end{figure}
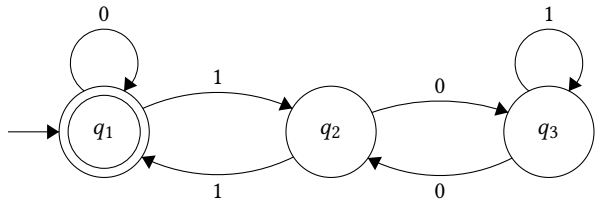

Converting a DFA into a miniKanren relation is straightforward.
We proceed recursively through the list while maintaining a state (here
represented as a symbol), and only accept an empty list if the current
state is an accepting state.

We present the result of this translation in \cref{fig:dfao}. The
translation from DFA to miniKanren clauses is mostly direct with one
exception: encountering a \texttt{0} requires the tail of the list to be
positive (a pair instead of the empty list), since Oleg numerals
cannot end in \texttt{0}. A \texttt{div3o} relation can utilize \texttt{dfao}
and passing in the starting state \texttt{'q1}.

\begin{figure}[ht]
\figsize{}
\begin{Verbatim}
(defrel (dfao l state)
  (conde
    [(== l '()) (== state 'q1)]
    [(fresh (a d next-state)
       (== l `(,a . ,d))
       (conde
         [(== a 0) (poso d)
          (conde
            [(== state 'q1) (== next-state 'q1)]
            [(== state 'q2) (== next-state 'q3)]
            [(== state 'q3) (== next-state 'q2)])]
         [(== a 1)
          (conde
            [(== state 'q1) (== next-state 'q2)]
            [(== state 'q2) (== next-state 'q1)]
            [(== state 'q3) (== next-state 'q3)])])
        (dfao d next-state))]))

(defrel (div3o/dfa n)
  (dfao n 'q1))
\end{Verbatim}
\Description[]{}
  \caption{The miniKanren relation derived from the DFA of \cref{fig:minimized-dfa} and a wrapper for calling it with the correct initial state.}\label{fig:dfao}
\end{figure}

\subsection{A relation that leaves bits fresh}\label{sec:a-relation-that-leaves-bits-fresh}
The previous implementation, \texttt{div3o/dfa}, uses a well-known recursion scheme over lists: the empty list base case, and the non-empty list recursive case where the list has a first element, \texttt{a}, and a tail, \texttt{d}.
Here we introduce another \texttt{div3o} implementation which we develop from a different recursion scheme.
We instead consider the cases of an empty list, a singleton list containing one element, and a list containing two or more elements: a first element \texttt{a}, a second element \texttt{ad}, and a tail \texttt{dd}.
\footnote{The names \texttt{a}, \texttt{d}, \texttt{ad}, and \texttt{dd} are used by convention. They are derived from the Lisp functions \texttt{car}, \texttt{cdr}, \texttt{cadr}, and \texttt{cddr}, respectively.}
We consider \texttt{a} to be at index 0 and \texttt{ad} to be at index 1, in accordance with \texttt{list-ref}.

This alternate recursion scheme deals with exactly one even-indexed element and one odd-indexed element.
So if the parent call deals with indices 0 and 1, the child call deals with indices 2 and 3, the grandchild call, 4 and 5, and so on.
Therefore any \texttt{a} is at an even index, and \texttt{ad} is at an odd index.

A sketch of our implementation is as follows: Given a list representing an Oleg number \texttt{n}, maintain a difference modulo 3 of the even-indexed bits and the odd-indexed bits, named \texttt{diff}. This argument, \texttt{diff}, is only ever 0, 1, or 2, because it represents a difference modulo 3.
It succeeds only if the difference is 0.
We call this implementation \texttt{div3o/even-odd}. We use a helper relation, \texttt{even-odd-helpero}, to carry the \texttt{diff} along with the Oleg number \texttt{n}.

We make a slight modification to this two-at-a-time recursion scheme.
Like in the definitions of \texttt{olego} and \texttt{dfao}, we strategically add a \texttt{poso} into the recursive case to ensure that our nonzero results end in a 1.
This requires us to consider the Oleg numbers up to 3 separately, and limit the recursive call to operate only on Oleg numbers 4 or greater.

\begin{figure}[ht]
\figsize{}
\begin{Verbatim}
(defrel (even-odd-helpero n diff)
  (conde
    [(== n '()) (== diff 0)]
    [(== n '(1)) (== diff 2)]
    [(== n '(0 1)) (== diff 1)]
    [(== n '(1 1)) (== diff 0)]
    [(fresh (a ad dd new-diff)
       (== n `(,a ,ad . ,dd))
       (poso dd)
       (conde
         [(== a ad) (== diff new-diff)]
         [(== `(,a ,ad) '(0 1)) (+1mod3o new-diff diff)]
         [(== `(,a ,ad) '(1 0)) (+1mod3o diff new-diff)])
       (even-odd-helpero dd new-diff))]))

(defrel (div3o/even-odd n)
  (even-odd-helpero n 0))
\end{Verbatim}
\Description[]{}
\caption{The \texttt{even-odd} implementation of \texttt{div3o}, and its helper relation \texttt{even-odd-helper}.}
\end{figure}

The \texttt{defrel} \texttt{div3o/even-odd} calls \texttt{even-odd-helpero} with a constant 0 passed to \texttt{diff}, asserting that the difference in even and odd bits should be zero if and only if \texttt{n} is divisible by 3.
When the first two elements of the list, \texttt{a} and \texttt{ad}, are equal, there is no change in the difference; the adjacent even and odd bits cancel each other out.
There are two ways that \texttt{a} and \texttt{ad} could be equal, but we do not enumerate them. Instead, we leave the bits fresh. This is useful for when we compare our implementations in \cref{sec:performance}.

However, we enumerate both cases which cause \texttt{a} and \texttt{ad} to be unequal.
First, if \texttt{a} is 1 and \texttt{ad} is 0, the new difference is one more than the old difference. Second, if \texttt{a} is 0 and \texttt{ad} is 1, the new difference is one \emph{less} than the old difference.

To increment and decrement \texttt{diff}, we define a two-place successor-modulo-3 relation \texttt{+1mod3o}. Incrementing is straightforward. Decrementing can be achieved by swapping the order of arguments.
Like \texttt{poso} and \texttt{div2o} from \cref{sec:minikanren-arithmetic}, this relation is non-recursive.
Therefore, \texttt{+1mod3o} is placed before any recursive relations.\footnote{Chapter 4 of \citetitle{friedman2018reasoned} explains why non-recursive relations should be placed before recursive relations.}
\begin{figure}[ht]
\figsize{}
\begin{Verbatim}
(defrel (+1mod3o n n+1)
  (conde
    [(== n 0) (== n+1 1)]
    [(== n 1) (== n+1 2)]
    [(== n 2) (== n+1 0)]))
\end{Verbatim}
\Description[]{}
\caption{A non-recursive relation asserting the successor of a number \texttt{n} mod 3.}
\end{figure}

The results of querying \texttt{div3o/even-odd} on a fresh variable is different than the other \texttt{div3o} implementations. Some of the bits remain as free logic variables, specifically when two adjacent bits are equal.
Because of this, we need to interpret the results differently.
Here, each solution does not represent a multiple of three, but rather a \emph{family} of multiples of three.
For example, the sixth solution \verb|(_.0 _.0 _.1 _.1 1 1)| expresses all multiples of three \(x + 2x + 4y + 8y + 16 + 32 = 3(x + 4y + 16)\) for any assignment of bits \(x, y \in \{0, 1\}\).

\begin{figure}[ht]
\figsize{}
\begin{Verbatim}[]
> (run 10 (q) (div3o/even-odd q))
'(()
  (1 1)
  (_.0 _.0 1 1)
  (0 1 1)
  (1 0 0 1)
  (_.0 _.0 _.1 _.1 1 1)
  (_.0 _.0 0 1 1)
  (_.0 _.0 1 0 0 1)
  (0 1 _.0 _.0 1)
  (1 0 _.0 _.0 0 1))
\end{Verbatim}
\Description[]{}
\caption{The first 10 results of running \texttt{div3o/even-odd} on a fresh variable \texttt{q}.}
\end{figure}

The first 10 answers of the above \texttt{div3o/even-odd} query with a fresh variable are interesting, because they leave many bits of the Oleg number fresh. As noted earlier, terms are implicitly constrained to represent Oleg numbers, and so the list elements are likewise constrained to $\{\mathtt{0}, \mathtt{1}\}$.
We discuss the notion of performance of the preceding definitions in the next section.

\section{Performance}\label{sec:performance}

Benchmarking these implementations presents an uncommon situation.
We can perform the same kind of query against each of our implementations.
For each of our implementations, a query with a fresh variable has infinitely many solutions, where each solution represents some \emph{finite} quantity of answers---and not generally same quantity.
It's uncommon to have a collection similar queries across different implementations of a relation, each of which produces a solution set with this property.
This raises some unique questions for how to benchmark implementations.

\subsection{How to compare implementations}\label{sec:how-to-compare-relations}

We first consider as measure the number of answers each solution represents.
The number of answers each solution represents will differ from solution to solution from implementation to implementation.
Thus, it makes sense to also track the number of potential ground solutions produced from each \texttt{defrel} implementation as a function the first \(i\) results in its solution sequence.

The benchmarks below operate on a higher-order argument, a reference to the specific \texttt{div3o} implementation being tested.
Given a \texttt{defrel} that succeeds on any Oleg multiple of three and fails on any Oleg non-multiple-of-three, and given natural numbers \(n\) and \(i\), below are some questions you can ask. If you ask these questions with a larger and larger values of \(n\) and \(i\), you can get closer to a theory of asymptotic analysis on relational programs.

\subsubsection{Speed}
The first test is straightforward. How long does it take the \texttt{div3o} implementation to succeed or fail on a ground Oleg number $n$?
This test cannot be performed on \texttt{defrel}s that diverge on failure.
Thankfully, this does not apply to any of our six \texttt{div3o} implementations.
\begin{figure}[ht]
\figsize{}
\begin{Verbatim}
(define (time-of-run* div3o n)
  (cpu-time
    (lambda ()
      (run* (q) (div3o (build-num n))))))
\end{Verbatim}
\Description[]{}
\caption{A host-level function that times how long a \texttt{div3o} implementation takes to halt on the Oleg form of the Racket number \texttt{n}.}
\end{figure}

The helper function \texttt{cpu-time} is defined in \cref{sec:appendix}. It evaluates its argument thunk, discards the result, and returns the elapsed CPU time in milliseconds.
We also make use of the \texttt{build-num}\footnote{\Citet{friedman2018reasoned} give a more comprehensive explanation of \texttt{build-num}.} function to convert a Racket number into its corresponding Oleg numeral.

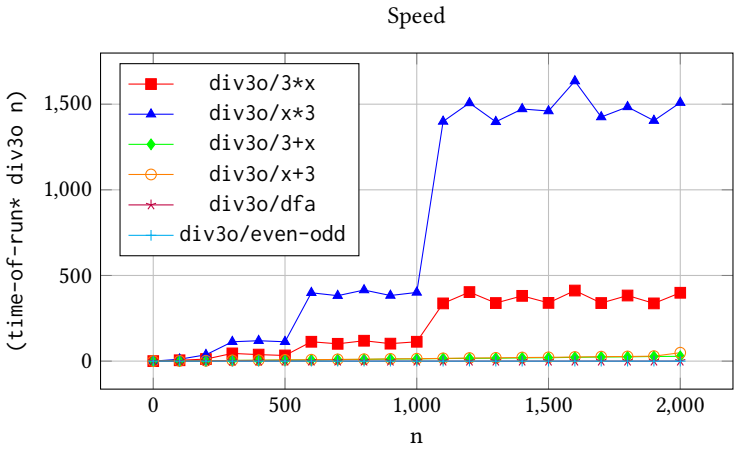
\begin{figure}[ht]
  \figsize{}
  \centering
  \begin{tikzpicture}
    \begin{axis}[
        title={Speed},
        xlabel={n},
        ylabel={\texttt{(time-of-run* div3o n)}},
        legend pos=north west,
        grid=both,
        width=10cm,
        height=6cm,
    ]

    \addplot[color=red, mark=square*]
    coordinates {
      (0,0)
      (100,5)
      (200,12)
      (300,45)
      (400,38)
      (500,33)
      (600,113)
      (700,101)
      (800,119)
      (900,102)
      (1000,113)
      (1100,337)
      (1200,403)
      (1300,339)
      (1400,381)
      (1500,340)
      (1600,412)
      (1700,339)
      (1800,383)
      (1900,337)
      (2000,399)
    };
    \addlegendentry{\texttt{div3o/3*x}}

    \addplot[color=blue, mark=triangle*]
    coordinates {
      (0,0)
      (100,11)
      (200,37)
      (300,113)
      (400,119)
      (500,113)
      (600,399)
      (700,382)
      (800,415)
      (900,383)
      (1000,401)
      (1100,1399)
      (1200,1507)
      (1300,1397)
      (1400,1472)
      (1500,1460)
      (1600,1635)
      (1700,1425)
      (1800,1484)
      (1900,1404)
      (2000,1509)
    };
    \addlegendentry{\texttt{div3o/x*3}}

    \addplot[color=green, mark=diamond*]
    coordinates {
      (0,0)
      (100,1)
      (200,2)
      (300,4)
      (400,5)
      (500,6)
      (600,7)
      (700,9)
      (800,10)
      (900,12)
      (1000,13)
      (1100,15)
      (1200,16)
      (1300,17)
      (1400,19)
      (1500,20)
      (1600,22)
      (1700,24)
      (1800,25)
      (1900,27)
      (2000,28)
    };
    \addlegendentry{\texttt{div3o/3+x}}

    \addplot[color=orange, mark=o]
    coordinates {
      (0,0)
      (100,1)
      (200,2)
      (300,4)
      (400,5)
      (500,7)
      (600,8)
      (700,10)
      (800,11)
      (900,13)
      (1000,14)
      (1100,16)
      (1200,18)
      (1300,19)
      (1400,21)
      (1500,22)
      (1600,24)
      (1700,26)
      (1800,27)
      (1900,29)
      (2000,49)
    };
    \addlegendentry{\texttt{div3o/x+3}}

    \addplot[color=purple, mark=star]
    coordinates {
      (0,0)
      (100,0)
      (200,0)
      (300,0)
      (400,0)
      (500,0)
      (600,0)
      (700,0)
      (800,0)
      (900,0)
      (1000,0)
      (1100,0)
      (1200,0)
      (1300,0)
      (1400,0)
      (1500,0)
      (1600,0)
      (1700,0)
      (1800,0)
      (1900,0)
      (2000,0)
    };
    \addlegendentry{\texttt{div3o/dfa}}

    \addplot[color=cyan, mark=+]
    coordinates {
      (0,0)
      (100,0)
      (200,0)
      (300,0)
      (400,0)
      (500,0)
      (600,0)
      (700,0)
      (800,0)
      (900,0)
      (1000,0)
      (1100,0)
      (1200,0)
      (1300,0)
      (1400,0)
      (1500,0)
      (1600,0)
      (1700,0)
      (1800,0)
      (1900,0)
      (2000,0)
    };
    \addlegendentry{\texttt{div3o/even-odd}}
  
    \end{axis}
  \end{tikzpicture}
\Description[]{}
\caption{How long does it take (average of three runs, in milliseconds) for each \texttt{defrel} to decide if a given $n \equiv 0 \pmod{3}$? This graphs indicates that \texttt{div3o/x*3} and \texttt{div3o/3*x} perform much worse than the other implementations, especially after $n = 255$, $n = 511$, and $n = 1023$. The increased bitlength of $n$ at each power of two explains these inflection points.}\label{fig:all-speed}
\end{figure}

We drop \texttt{div3o/3*x} and \texttt{div3o/x*3} from consideration and perform the same benchmark on the remaining implementations for much higher values of $n$.

\begin{figure}[ht]
  \figsize{}
  \centering
  \begin{tikzpicture}
    \begin{axis}[
        title={Speed},
        xlabel={n},
        ylabel={\texttt{(time-of-run* div3o n)}},
        legend pos=north west,
        grid=both,
        width=10cm,
        height=6cm,
        tick scale binop=\times,
        scaled ticks=false,
        yticklabel style={
            /pgf/number format/.cd,
                fixed,
                fixed zerofill,
                precision=0
        },
        xticklabel style={
            /pgf/number format/.cd,
                fixed,
                fixed zerofill,
                precision=0
        },
    ]

    \addplot[color=green, mark=diamond*]
    coordinates {
      (0,0)
      (10000,179)
      (20000,364)
      (30000,590)
      (40000,826)
      (50000,1094)
      (60000,1325)
      (70000,1600)
      (80000,1833)
      (90000,2104)
      (100000,2376)
      (110000,2647)
      (120000,2933)
      (130000,3233)
      (140000,3518)
      (150000,3839)
      (160000,4060)
      (170000,4398)
      (180000,4698)
      (190000,5035)
      (200000,5311)
    };
    \addlegendentry{\texttt{div3o/3+x}}

    \addplot[color=orange, mark=o]
    coordinates {
      (0,0)
      (10000,175)
      (20000,390)
      (30000,628)
      (40000,879)
      (50000,1170)
      (60000,1411)
      (70000,1695)
      (80000,1971)
      (90000,2266)
      (100000,2569)
      (110000,2854)
      (120000,3166)
      (130000,3451)
      (140000,3772)
      (150000,4078)
      (160000,4376)
      (170000,4710)
      (180000,5025)
      (190000,5346)
      (200000,5727)
    };
    \addlegendentry{\texttt{div3o/x+3}}

    \addplot[color=purple, mark=star]
    coordinates {
      (0,0)
      (10000,0)
      (20000,0)
      (30000,0)
      (40000,0)
      (50000,0)
      (60000,0)
      (70000,0)
      (80000,0)
      (90000,0)
      (100000,0)
      (110000,0)
      (120000,0)
      (130000,0)
      (140000,0)
      (150000,0)
      (160000,0)
      (170000,0)
      (180000,0)
      (190000,0)
      (200000,0)
    };
    \addlegendentry{\texttt{div3o/dfa}}

    \addplot[color=cyan, mark=+]
    coordinates {
      (0,0)
      (10000,0)
      (20000,0)
      (30000,0)
      (40000,0)
      (50000,0)
      (60000,0)
      (70000,0)
      (80000,0)
      (90000,0)
      (100000,0)
      (110000,0)
      (120000,0)
      (130000,0)
      (140000,0)
      (150000,0)
      (160000,0)
      (170000,0)
      (180000,0)
      (190000,0)
      (200000,0)
    };
    \addlegendentry{\texttt{div3o/even-odd}}
  
    \end{axis}
  \end{tikzpicture}
\Description[]{}
  \caption{How long does it take (average of three runs, in miliseconds) for each \emph{fast} \texttt{defrel} to decide if a given $n \equiv 0 \pmod{3}$?
}\label{fig:fast-speed}
\end{figure}
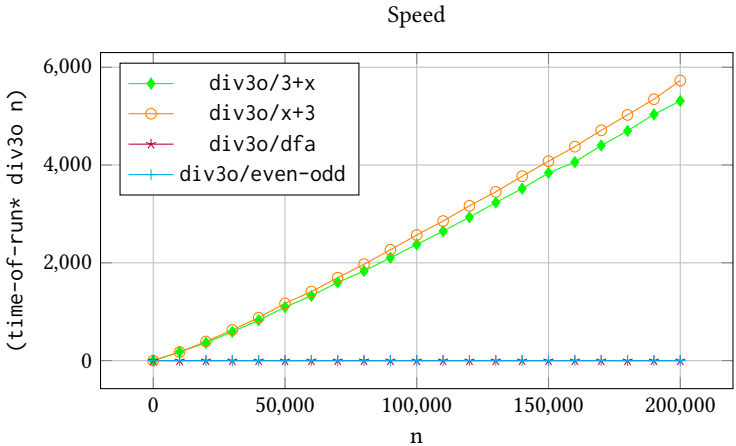

At this scale there is a noticeable speed difference between the addition-based implementations and the DFA-based implementations.
The results of this test point us towards either using \texttt{div3o/dfa} or \texttt{div3o/even-odd} as the canonical implementation.
Among those two, it seems better to use \texttt{div3o/dfa} because it consumes its input in a more standard recursive style. It is a simpler implementation.
Still, we argue that there is something preferable about the \texttt{div3o/even-odd} implementation. It requires a different way of quantifying the \texttt{defrel}'s performance.

\subsubsection{Reach}
There is another heuristic that we argue is useful: how many ground Oleg numbers are subsumed by the first \(i\) members of the solution sequence?
We call this a relation's \emph{reach}\footnote{Our \texttt{reach} implementation could be generalized to possibly infinite unary Oleg relations, like \texttt{div2o}, but not concisely.}.
A relation's reach is evidence of its non-ground solutions. If a relation has a superlinear reach, then its solution sequence necessarily contains logic variables.
Relations with greater reach may perform better in conjunctions. Ideally, the first conjunct succeeds a finite number of times, so that the second relation must only consider a constant number of substitutions to extend or fail on.
But there are cases where the first conjunct does not succeed finitely.
For example, any \texttt{div3o} implementation run on fresh variables must succeed infinitely often, as proved in \cref{lemma:no-finite}.
In a general sense, a relation's reach should approach the size of its solution space. As in the working example of finding numbers divisible by three, a relation that leaves variables fresh will approach the infinite solution set faster.
In problems where the solution space is finite, such as a sudoku puzzle, an ideal relation's reach would be the number of possible solutions.
\begin{figure}[ht]
\figsize{}
\begin{Verbatim}
(define (reach div3o i)
  (set-count
    (append-map all-solutions (run i (q) (div3o q)))))
\end{Verbatim}
\Description[]{}
\caption{The \texttt{reach} function counts how many multiples of 3 the first \(i\) partial terms describe in a \texttt{div3o} relation.}\label{fig:partials-described}
\end{figure}
\begin{figure}[ht]
\figsize{}
\centering
\begin{tikzpicture}
  \begin{axis}[
      title={Reach},
      xlabel={i},
      ylabel={\texttt{(reach div3o i)}},
      legend pos=north west,
      grid=both,
      width=10cm,
      height=6cm,
      tick scale binop=\times,
      scaled ticks=false,
      yticklabel style={
          /pgf/number format/.cd,
              fixed,
              fixed zerofill,
              precision=0
      },
      xticklabel style={
          /pgf/number format/.cd,
              fixed,
              fixed zerofill,
              precision=0
      },
  ]
  \addplot[color=purple, mark=star]
      coordinates {
        (0,0)
        (1000,1000)
        (2000,2000)
        (3000,3000)
        (4000,4000)
        (5000,5000)
        (6000,6000)
        (7000,7000)
        (8000,8000)
        (9000,9000)
        (10000,10000)
        (11000,11000)
        (12000,12000)
        (13000,13000)
        (14000,14000)
        (15000,15000)
        (16000,16000)
        (17000,17000)
        (18000,18000)
        (19000,19000)
        (20000,20000)
      };
  \addlegendentry{\texttt{div3o/dfa}}

  \addplot[
      color=cyan,
      mark=+
      ]
      coordinates {
        (0,0)
        (1000,14350)
        (2000,39216)
        (3000,81918)
        (4000,107158)
        (5000,153206)
        (6000,223846)
        (7000,266598)
        (8000,292778)
        (9000,353566)
        (10000,418614)
        (11000,588222)
        (12000,611494)
        (13000,649022)
        (14000,728358)
        (15000,754342)
        (16000,799814)
        (17000,880602)
        (18000,965974)
        (19000,1018790)
        (20000,1143606)
      };
  \addlegendentry{\texttt{div3o/even-odd}}

  \end{axis}
\end{tikzpicture}
\Description[]{}
\caption{Since \texttt{div3o/dfa} grounds its argument, the first 20,000 solutions of \texttt{div3o/dfa} implementation subsumes 20,000 multiples of 3. This is true for all of our prior \texttt{div3o} implementations, because they leave no bits fresh. On the other hand, the first 20,000 solutions of \texttt{div3o/even-odd} subsumes 1,143,606 multiples of 3.}\label{fig:reach}
\end{figure}
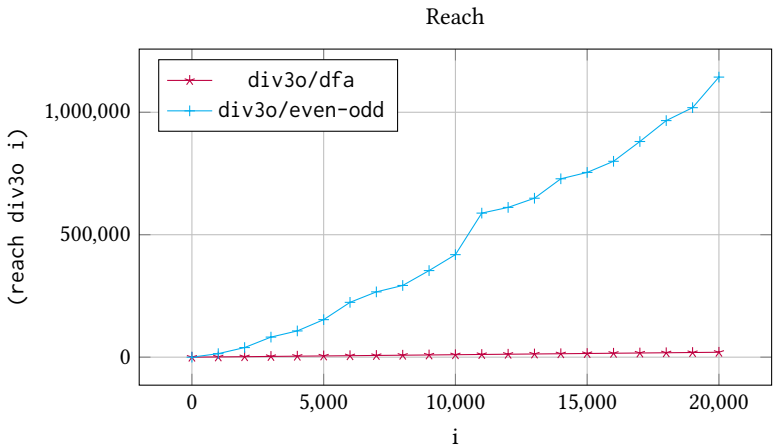

A \texttt{defrel} like \texttt{div3o/x*3} has linear reach, because \texttt{div3o/x*3} always conjures ground results. Since \texttt{div3o/even-odd} leaves some bits of its solution set fresh, it has superlinear reach, as seen in \cref{fig:reach}.
\section{Related Work}\label{sec:related}

There is of course a wealth of research in benchmarking for logic programming.
This benchmarking typically uses a fixed set of illustrative programs~\cite{haygood1989prolog} as the \emph{data} for comparison.
This could be in the service of comparing the performance of the code an particular implementation generates for different targets, comparing implementations of the same language~\cite{fernandezleiva2000comparative,demoen2001odd}, or even for drawing a comparison between different languages~\cite{warren1977prolog,somogyi1995mercury} or paradigms~\cite{vanroy1990can}.

Comparing the performance of two implementations of the same or closely related relations comes up more often in the context of debugging and profiling.
Previous work on profiling Prolog programs can be found in \citet{debray1988profiling}, which focuses on adding timing facilities within Prolog itself.
We have the benefit of working with an embedded logic language, so we instead opt to treat a \texttt{run} of our embedded relational programs as a black box and use timing facilities from the host language.
Other approaches to profiling deal with a first order structure of the logic program by tracking the call graph~\cite{spivey2004fast,mera2011profiling} or the program points~\cite{minikanrenprofiler}.
\Citet{minikanrenprofiler} describe recent work toward writing performant relations in miniKanren, where their focus is failing early to cut out parts of the search.

Such profiling tools could also show the differences in performance and may offer additional insight into the behavior.
We implement vastly different algorithms for solving the problem, so it's not clear how much guidance profiling would offer to a programmer using the wrong algorithm.
Our process of converting finite automata to miniKanren code follows \citeposs{chomskyhierarchy}.

Far and away the closest work is \citet{kiselyov2008pure}.
We rely heavily on their arithmetic suite and their representation of numerals, and their predicates for implicit typing.
Their focus is on arriving at arithmetic relations with good termination properties; they compare implementations that differ in termination behavior, but they are not particularly concerned with their performance beyond termination.
Unlike their work, our focus is defining and measuring performance---both of arithmetic-based implementations and otherwise---for a particular problem.

There is of course a long history of ternary computer arithmetic.
A relational ternary arithmetic\footnote{For instance, like that described in \url{https://homepage.cs.uiowa.edu/~dwjones/ternary/}} suite for miniKanren would trivialize \texttt{div3o} in same way as \texttt{div2o} on binary numbers.

\section{Conclusion and Future Work}\label{sec:conclusion}

We investigated several definitions of the relation \enquote{evenly divisible by three} concretely implemented using miniKanren.
We proved that no \texttt{div3o} implementation could finitely represent all the solutions---every implementation must enumerate an infinite solution sequence.
Our present question about measuring performance arose because repeated evaluation of a goal has an infinite solution set, but the solutions describe varying, though always finite, numbers of answers.

Future work includes finding a \texttt{div3o} implementation with even better performance, reach, or both.
There is also an opportunity to generalize the \texttt{reach} function to count the ground solution sets of any relation, not just the unary relations whose argument is implicitly typed as an Oleg number.

\begin{acks}
We thank the anonymous reviewers for their suggestions and improvements. We also thank Will Byrd for his recommendation to benchmark these implementations using \texttt{faster-miniKanren}. The research of Jason Hemann and Brysen Pfingsten has been partially supported by \grantsponsor{nsf}{NSF}{https://www.nsf.gov} grant \grantnum{nsf}{CCF-2348408}.
\end{acks}

\printbibliography{}

\pagebreak

\section{Appendix}\label{sec:appendix}

\begin{figure}[ht]
  \figsize{}
  \begin{Verbatim}
(define (ground i)
  (or (eqv? i 0) (eqv? i 1)))

(define (find-first-reified num)
  (let ([reifieds (filter-not ground num)])
    (and (pair? reifieds) (car reifieds))))

(define (((n-or-input n) vs) i)
  (if (eqv? i vs) n i))

(define one-or-input (n-or-input 1))
(define zero-or-input (n-or-input 0))

(define (helper queue)
  (cond
    [(find-first-reified (car queue))
     => (lambda (fr)
          (helper
            (append
              (cdr queue)
              (list
                (map (zero-or-input fr) (car queue))
                (map (one-or-input fr) (car queue))))))]
    [else queue]))

(define (all-solutions b)
  (cond
    [(null? b) '(())]
    [(ground (last b)) (helper (list b))]
    [else (helper (list (map (one-or-input (last b)) b)))]))
  \end{Verbatim}
  \Description[]{}
  \caption{A set of Racket functions to return the list of unique Oleg numbers subsumed by the term \texttt{b}, which may contain reified variables like \texttt{\_.0}.}
\end{figure}
\end{document}

\begin{figure}[ht]
  \figsize{}
  \begin{Verbatim}
  (define (cpu-time proc)
    (let-values ([(result cpu real garbage-collection) (time-apply proc '())])
      cpu))
  \end{Verbatim}
  \Description[]{}
  \caption{A Racket function which runs \texttt{proc} with no arguments, discards the result, and instead returns the CPU time elapsed, in milliseconds.}
\end{figure}

\begin{figure}[ht]
  \figsize{}
  \begin{Verbatim}
(define fast-div3o-impls
  (list
    div3o/3+x
    div3o/x+3
    div3o/dfa
    div3o/even-odd))

(define div3o-impls
  (list*
    div3o/3*x
    div3o/x*3
    fast-div3o-impls))
  \end{Verbatim}
  \Description[]{}
  \caption{A list of all the \texttt{div3o} implementations, preceded by a list containing just the faster ones.}
\end{figure}

\begin{Verbatim}
(require math/statistics)

(define ((multi-run-avg i) rel n)
  (round
    (mean
      (for/list ((i (range 0 i)))
        (time-of-run* rel n)))))

(define three-run-avg (multi-run-avg 3))

(define (plot-time/each impls input-range)
  (for-each
    (lambda (div3o)
      (displayln (object-name div3o))
      (for-each
        (lambda (n)
          (printf "~a,~a\n" n (three-run-avg div3o n)))
        input-range))
    impls))

(define (plot-time-run*)
  (plot-time/each div3o-impls (range 0 2100 100)))

(define (plot-time-run*-fast)
  (plot-time/each fast-div3o-impls (range 0 210000 10000)))
\end{Verbatim}